\begin{document}

\title{A new upper bound and optimal constructions of equi-difference conflict-avoiding codes on constant weight
\thanks{This work is financially supported by the National Natural Science Foundation of China (No. 61902429), Fundamental Research Funds for the Central Universities (No. 19CX02058A), Shandong Provincial Natural Science Foundation of China (ZR2019MF070).\\
$^{\ast}$\ The corresponding author: zhaochune1981@163.com}}


\titlerunning{Autocorrelation distribution and $2$-adic complexity}        

\author{Chun-e Zhao \and Wenping Ma \and Tongjiang Yan  \and Yuhua Sun }


\institute{Chun-e Zhao, Tongjiang Yan, Yuhua Sun
\at
College of Sciences,
China University of Petroleum,
Qingdao 266555,
Shandong, China
}
\institute{Wenping Ma
\at
State Key laboratory of Integrated Service Networks, Xidian University,
Xi'an 710071, China
}
\date{Received: date / Accepted: date}

\maketitle

\begin{abstract}
Conflict-avoiding codes (CACs) have been used in multiple-access collision channel without feedback. The size of a CAC is the number of potential users that can be supported in the system. A code with maximum size is called optimal. The use of an optimal CAC enables the largest possible number of asynchronous users to transmit information efficiently and reliably. In this paper, a new upper bound on the maximum size of arbitrary equi-difference CAC is presented. Furthermore, three optimal constructions of equi-difference CACs are also given. One is a generalized construction for prime length $L=p$ and the other two are for two-prime length $L=pq$.
\keywords{ conflict-avoiding codes\and  equi-difference codes \and  optimal construction \and  exceptional code \and non exceptional code }
\end{abstract}

\section{Introduction} \label{section 1}
Nowadays communication has become an indispensable part of people's daily life. Coding plays an important role in kinds of communication systems, especially in multi-access communication system. Multi-access channels are widely used in the fields of mobile and satellite communication networks. TDMA (time-division multiple address) is an important multi-access technique.

In a TDMA system, the satellite working time is divided into periodic frames and each frame is then divided into some time slots. In order to support user-irrepressibility, each user is assigned a protocol sequence which is derived by a CAC codeword. So conflict-avoiding codes have been studied as protocol sequences for a multiple-access channel (collision channel) without feedback\cite{RefJ8,RefJ9,RefJ10,RefJ11,RefJ12,RefJ13,RefJ15}. And the technical description of such a multiple-access channel model can be found in \cite{RefJ11,RefJ14}. The protocol sequence is a binary sequence and the number of ones in it is called its Hamming weight. The Hamming weight $k$ of the sequence is the minimum weight requirement for user-irrepressibility and it also means the maximal number of active users who can send packets in the same time slot.

So there are two different but complementary design goals in the literatures of user-irrepressible and conflict-avoiding sequences. The first one is minimizing the length of the binary sequences for fixed potential users' number $N$ \cite{RefJ2008,RefJ2009}. The second one is maximizing the total number of potential users for fixed sequence length $L$ and the number of active users $w$. We concentrate on the second one in this paper.

For fixed length $L$, many works are devoted to determine the maximal number of potential users for Hamming weight three in \cite{RefJ9,RefJ7,RefJ4,RefJ2,RefJ6,RefJ2013}. Some optimal constructions for Hamming weight four and five are presented in \cite{RefJ3}. An asymptotic version of this general upper bound can be found in \cite{RefJ19}. A general upper bound on the number of potential users for all Hamming weights is provided in \cite{RefJ1}.

In \cite{RefJ19}, the asymptotic bound on the size of constant-weight conflict-avoiding codes have been discussed. This is about the arbitrarily CACs, so the bound is not the best for special cases. In this paper, we will focus on the equi-difference CACs. First, an upper bound on the maximum size of equi-difference CACs for constant weight is presented. This upper bound is lower than the former ones and its concise expression will greatly reduce the time complexity of validation. Secondly, three new constructions of optimal equi-difference CACs are presented. Correspondingly, the range of CACs constructed will be enlarged. Third, the results show that these new codewords' size can reach this new upper bound. As a result, these CACs constructed are optimal and this new upper bound can be reached.

\section{Preliminaries}\label{section 2}

Let $P(L,w)$ be the set of all $w$-subsets of $Z_{L}= \{0,1,...,L-1\}$ and $Z_{L}^{*}=Z_{L}\setminus\{0\}$. Given a $w$-subset $I\in P(L,w)$, we define the \emph{set of difference} of $I$ by
\begin{center}
    $d^{*}(I)= \{j-i|i,j\in I,i\neq j\},$
\end{center}
where the $j-i$ is modulo $L$.
 A \emph{conflict-avoiding code} (CAC)$\ C$ of length $L$ and weight $w$ is a subset $C\subset P(L,w)$ satisfying the following condition
 \begin{center}
    $d^{*}(I_{j})\bigcap d^{*}(I_{k})=\emptyset$ for any $I_{j},I_{k}\in C,j\neq k$.
 \end{center} Each element $I\in C$ is called a \emph{codeword} of length $L$ and weight $w$.

A codeword $I$ is called equi-difference if the elements in $I$ form an arithmetic progression in $Z_{L}$, i.e. $I=\{0,g,2g,...,(w-1)g\}$ for some $g\in Z_{L}$, where the product $jg$ is reduced mod $L$, for $j=0,1,2,3,...,w-1$. The element $g$ is called a generator of this codeword. For an equi-difference codeword $I$ generated by $g$, the set of difference is
$$d^{*}(I)=\{\pm g,\pm2g,...,\pm(w-1)g\}.$$
 We note that the elements $\pm g,\pm2g,...,\pm(w-1)g$ may not be distinct mod $L$.
Hence in general we have $|d^{*}(I)|\leq2w-2$. We adopt the terminology in \cite{RefJ3} and say that a codeword $I$ of weight $w$
is \emph{exceptional} if $|d^{*}(I)|<2w-2$. Let $d(I)=d^{*}(I)\cup\{0\}$, then $|d^{*}(I)|<2w-2$ is equivalent to $|d(I)|<2w-1$.

If all codewords in a CAC $C$ are equi-difference, then $C$ is called equi-difference. Let $CAC^{e}(L,w)$ denote the class of all equi-difference CACs of length $L$ and weight $w$. For every code $C\in CAC^{e}(L,w)$, $C=C_{1}\cup C_{2}$ always holds, where $C_{1}=\{I\in C,I$ is exceptional\} and $C_{2}=\{I\in C,I$ is non exceptional$\}$. The maximal size of some code $C\in CAC^{e}(L,w)$ is denoted by $M^{e}(L,w)$, i.e.
$$M^{e}(L,w)=\max\{|C||C\in CAC^{e}(L,w)\}.$$
A code $C\in CAC^{e}(L,w)$ is called optimal if $|C|=M^{e}(L,w)$.
An optimal code $C\in CAC^{e}(L,w)$ is called tight if $\bigcup\limits_{I\in C}d^{*}(I)=Z_{L}^{*}$.

\section{An upper bound on equi-difference CACs}\label{section 3}

\begin{lemma}
\label{lemma size of d(I)}
\cite{RefJ19}: $|d(A)|\geq |A|$ for any subset $A$ in $G$.
\end{lemma}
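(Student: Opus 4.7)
The plan is to exhibit $|A|$ distinct elements inside $d(A)$ by translating $A$ so that one of its elements lands on $0$. Concretely, pick any fixed element $a_0 \in A$ and consider the map $\varphi : A \to d(A)$ defined by $\varphi(a) = a - a_0$ (with subtraction in the group $G$).

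First I would verify that $\varphi$ is well-defined, i.e.\ that its image sits inside $d(A)$. For $a = a_0$ we get $\varphi(a_0) = 0 \in d(A)$ by the definition $d(A) = d^{*}(A) \cup \{0\}$. For $a \neq a_0$ we get $\varphi(a) = a - a_0 \in d^{*}(A) \subseteq d(A)$, since $a, a_0 \in A$ are distinct. Next I would check injectivity: if $\varphi(a) = \varphi(a')$ then $a - a_0 = a' - a_0$ in $G$, so $a = a'$. Since $\varphi$ is an injection from $A$ into $d(A)$, we conclude $|d(A)| \geq |A|$.

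There really is no serious obstacle here; the only thing to be careful about is including the element $0$ in $d(A)$ (it is \emph{not} in $d^{*}(A)$), because the translate of $a_0$ by itself is $0$, and without it the map would only give $|A| - 1$ differences. This is exactly why the lemma is phrased in terms of $d(A)$ rather than $d^{*}(A)$.
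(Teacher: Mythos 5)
Your proof is correct: the translation map $a \mapsto a - a_0$ is a well-defined injection from $A$ into $d(A)$, and you rightly flag that the element $0$ (the image of $a_0$ itself) is what makes the count come out to $|A|$ rather than $|A|-1$. The paper itself gives no proof of this lemma --- it is quoted from the cited reference --- but your argument is the standard one for this fact and there is nothing to object to.
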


\begin{lemma}\label{known upper bound11}
\cite{RefJ19}: Let $w(n)$ denote the number of distinct prime divisors of $n$. For $n\geq2$ and $w\geq2$, we have
\begin{equation}\label{known upper bound1}
M(n,w)\leq\frac{n-1}{2w-2}+\frac{w(n)}{2}
\end{equation}
\end{lemma}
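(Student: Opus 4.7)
My plan is to partition $C$ by exceptionality and combine the pigeonhole bound on disjoint difference sets with Lemma~\ref{lemma size of d(I)}. Write $C = C_1 \cup C_2$, where $C_1$ collects the exceptional codewords and $C_2 = C \setminus C_1$. The CAC condition makes the sets $d^*(I)$ pairwise disjoint in $Z_n^* = Z_n \setminus \{0\}$, so
$$\sum_{I \in C_1} |d^*(I)| + \sum_{I \in C_2} |d^*(I)| \leq n-1.$$
For $I \in C_2$ one has $|d^*(I)| \geq 2w-2$ by the very definition of non-exceptional, while for $I \in C_1$ Lemma~\ref{lemma size of d(I)} applied to $A = I$ gives $|d(I)| \geq |I| = w$, hence $|d^*(I)| \geq w - 1$. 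Substituting yields
$$(w-1)|C_1| + (2w-2)|C_2| \leq n-1,$$
and dividing through by $2w-2$ rearranges to
$$|C| = |C_1| + |C_2| \leq \frac{n-1}{2w-2} + \frac{|C_1|}{2}.$$

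It therefore suffices to establish $|C_1| \leq w(n)$. The intuition is that an exceptional codeword is confined to a proper substructure: the inequality $|d(I)| < 2w-1$ together with the fact that $d(I)$ is a symmetric subset of $Z_n$ containing $0$ forces, via a Freiman/Kneser-type argument, the set $I$ to lie inside a proper subgroup $H \leq Z_n$ whose order is at most $2w-2$. Subgroups of $Z_n$ are in bijection with divisors of $n$, and the disjointness of the sets $d^*(I)$ across $C$ forbids two exceptional codewords from being housed in the same subgroup, since their difference sets would each fill a large portion of that small subgroup and hence intersect. The task is thus reduced to bounding the number of divisors $d$ of $n$ in the window $[w, 2w-2]$ that can actually host an exceptional codeword.

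The hardest step is the final combinatorial count: showing this number is at most $w(n)$. The window $[w, 2w-2]$ has width $w-1$, forcing any two of its divisors of $n$ to satisfy $d_2/d_1 < 2$, which severely restricts their joint prime factorizations. One expects to extract an injection from the admissible divisors into the set of distinct prime factors of $n$, for instance by associating each admissible $d$ with a carefully chosen prime appearing in $n/d$ that is not shared by the quotient coming from any other admissible divisor. Once $|C_1| \leq w(n)$ is secured, substituting into the display above yields the claimed bound $M(n,w) \leq \frac{n-1}{2w-2} + \frac{w(n)}{2}$. The delicate point will be making this prime-assignment rigorous uniformly in $w$ and in the shape of the factorization of $n$; this is where I anticipate the main technical obstacle.
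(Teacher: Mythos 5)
The paper does not actually prove this lemma---it is quoted from \cite{RefJ19}---but the same two-stage argument appears in the paper's own proof of Theorem~\ref{equi-difference upperbound}, so your skeleton can be checked against that. Your first stage is correct: disjointness of the sets $d^{*}(I)$ inside $Z_{n}^{*}$, the bound $|d^{*}(I)|\geq 2w-2$ for non-exceptional codewords, and $|d^{*}(I)|=|d(I)|-1\geq w-1$ from Lemma~\ref{lemma size of d(I)} do combine to give $|C|\leq \frac{n-1}{2w-2}+\frac{|C_{1}|}{2}$, so everything reduces to showing $|C_{1}|\leq w(n)$.

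The gap is in how you propose to prove $|C_{1}|\leq w(n)$. You reduce it to counting the divisors of $n$ in the window $[w,2w-2]$ and hope to inject these into the prime divisors of $n$ using only the fact that any two such divisors have ratio less than $2$. That cannot work: for $n=2^{10}3^{6}$ the divisors $512=2^{9}$, $576=2^{6}3^{2}$, $648=2^{3}3^{4}$ and $729=3^{6}$ all lie within a factor of about $1.43$ of one another, yet $w(n)=2$, so no injection into the primes of $n$ can be extracted from the window condition alone. Your intermediate claim is also too weak: knowing that two exceptional codewords are housed in \emph{distinct} subgroups does not prevent their difference sets from meeting, since distinct subgroups of $Z_{n}$ can share nontrivial elements. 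The missing ingredient is exactly the one the paper records as condition (ii) in the proof of Theorem~\ref{equi-difference upperbound}: the relevant subgroups must have \emph{pairwise coprime} orders. Concretely, Kneser's theorem applied to $I-I$ gives, for each exceptional $I$, a nontrivial subgroup $H_{I}$ (the stabilizer of $d(I)$) with $H_{I}\subseteq d(I)$; if $H_{I}$ were trivial one would get $|d(I)|\geq 2w-1$, contradicting exceptionality. Then $d^{*}(I_{1})\cap d^{*}(I_{2})=\emptyset$ forces $H_{I_{1}}\cap H_{I_{2}}=\{0\}$, i.e.\ $\gcd(|H_{I_{1}}|,|H_{I_{2}}|)=1$. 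Pairwise coprime divisors of $n$ that are all greater than $1$ carry distinct prime factors, so there are at most $w(n)$ of them; the window $[w,2w-2]$ plays no role in this count. With that replacement your argument closes.
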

\begin{lemma}\label{known upper bound22}
\cite{RefJ1}: For $L\geq w\geq2$,
\begin{equation}\label{known upper bound2}
M(L,w)\leq\lfloor\frac{L-1+F(L,w)}{2w-2}\rfloor,
\end{equation}
where $F(L,w):=\max\limits_{S\in \ell(L,w)}\sum\limits_{x\in S}(x-1-2x\lceil w/x\rceil+2w)$,

$\ell(L,w):=\{S\subseteq S(L,w):gcd(i,j)=1,\forall i,j\in S,i\neq j\},$

$S(L,w):=\{x\in\{2,3,\cdots,2w-2\}:x$ divides $L$, and $2x\lceil w/x\rceil-x\leq2w-2\}$.
\end{lemma}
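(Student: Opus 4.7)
The plan is to combine the two standard ingredients behind upper bounds for conflict-avoiding codes: (i) pairwise disjointness of the difference sets forces $\sum_{I\in C}|d^{*}(I)|\leq L-1$, and (ii) for each $w$-subset $I\subseteq Z_{L}$ one has $|d^{*}(I)|\geq 2w-2$ away from a controlled class of ``exceptional'' codewords. Writing $\delta(I)=\max\{0,\,2w-2-|d^{*}(I)|\}$, the task reduces to showing $\sum_{I\in C}\delta(I)\leq F(L,w)$; once this is in hand, $(2w-2)|C|\leq\sum_{I\in C}|d^{*}(I)|+F(L,w)\leq L-1+F(L,w)$, and dividing by $2w-2$ and invoking integrality of $|C|$ yields the stated floor bound.

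The pointwise analysis of $\delta(I)$ is the next step. I would fix a divisor $x$ of $L$ and ask when a $w$-subset can be arranged so that many of its $2w-2$ signed differences collapse modulo the subgroup of index $x$. A short combinatorial argument, using $|d(I)|\geq|I|$ from Lemma~\ref{lemma size of d(I)} applied inside cosets of that subgroup, shows that the worst-case defect attributable to $x$ is exactly $x-1-2x\lceil w/x\rceil+2w$, matching the summand in $F(L,w)$. The side condition $2x\lceil w/x\rceil-x\leq 2w-2$ defining $S(L,w)$ is precisely what forces this quantity to be non-negative, so only those $x$ need be considered.

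The main obstacle is aggregating these per-divisor defects over all of $C$ without double counting. I would argue that, because the difference sets of distinct codewords are pairwise disjoint, each exceptional $I$ can be attributed to at most one divisor $x$ from any pairwise-coprime family $S\subseteq S(L,w)$, and codewords attributed to a common $x$ occupy disjoint portions of the index-$x$ subgroup. This is exactly where the coprimality condition defining $\ell(L,w)$ becomes essential: it is what allows the local bounds to be summed over $x\in S$ without conflict. Maximizing over $S\in\ell(L,w)$ then delivers $\sum_{I\in C}\delta(I)\leq F(L,w)$, closing the argument. The subtle point I expect will take the most care is verifying that the attribution to a single $x$ is possible globally, i.e.\ that the different exceptional structures supported by distinct coprime divisors do not interfere with one another.
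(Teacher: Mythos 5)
First, note that the paper does not prove this lemma at all: it is quoted verbatim from \cite{RefJ1}, so there is no in-paper argument to compare against. Your overall skeleton --- disjointness of the $d^{*}(I)$ giving $\sum_{I}|d^{*}(I)|\leq L-1$, a per-codeword defect $\delta(I)=\max\{0,2w-2-|d^{*}(I)|\}$, the reduction to $\sum_{I}\delta(I)\leq F(L,w)$, and the final floor from integrality --- is indeed the standard route taken in \cite{RefJ1}. But two steps you wave at are not as cheap as you claim, and one of them fails as stated.

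The pointwise bound is the real gap. The correct statement is: if $H$ is the stabilizer of $d(I)=I-I$ and $x=|H|$ (the \emph{order} of the subgroup, not its index --- the summand $x-1-2x\lceil w/x\rceil+2w$ only makes sense with $x=|H|$, reducing to $2w-1-x$ when $x\geq w$), then $|d(I)|\geq\bigl(2\lceil w/x\rceil-1\bigr)x$, whence $\delta(I)\leq x-1-2x\lceil w/x\rceil+2w$. This does \emph{not} follow from Lemma~\ref{lemma size of d(I)} applied inside or among cosets: that lemma only gives $|\bar{I}-\bar{I}|\geq|\bar{I}|=n$ for the image $\bar{I}$ in $Z_{L}/H$, hence $|d(I)|\geq nx\geq x\lceil w/x\rceil$, which is too weak by roughly a factor of two. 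You need Kneser's addition theorem, $|I-I|\geq 2|I+H|-|H|$ with $H=\mathrm{stab}(I-I)$, to get the $2n-1$ in place of $n$; that is the substantive input of \cite{RefJ1} and your sketch omits it. The aggregation step is also misdescribed: since $H\subseteq d(I)$ and $x\geq 2$, every exceptional codeword contributes the nonempty set $H\setminus\{0\}$ to $d^{*}(I)$, so disjointness forces the stabilizers of two exceptional codewords to meet only in $0$, i.e.\ their orders are coprime and in particular distinct. Thus \emph{at most one} codeword is attributed to each $x$, and the orders that occur automatically form a set in $\ell(L,w)$. Your picture of several codewords ``attributed to a common $x$, occupying disjoint portions'' of the subgroup cannot occur, and if it could, $\sum_{I}\delta(I)$ could exceed the single summand that $F(L,w)$ allots to that $x$, so the bound would not close.
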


The two upper bounds listed above have their own merits and drawbacks. Using a similar method to the one used in Theorem 3.7\cite{RefJ2011SIAM}, we give a new upper bound for equi-difference CACs. It is easier to be reached than the first one and more easily to deal with problems than the second one.
\begin{theorem}
\label{equi-difference upperbound}
Let $\Omega(L,w)=\{p$ is a divisor of $L$ and $w\leq p<2w-1\}$. For $n\geq2$ and $w\geq2$, then
 \begin{equation}M^{e}(L,w)\leq \frac{L-1+\sum\limits_{p\in \Omega^{*}(L,w)}(2w-1-p)}{2w-2},
 \end{equation}
where $\Omega^{*}(L,w)=\{p\in\Omega(L,w)|p$ is prime or $p$ satisfies if $gcd(p,p^{'})\neq1$ for $p^{'}\in\Omega(L,w)$, $p\leq p^{'}$ always holds $\}$.
\end{theorem}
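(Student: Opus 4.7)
The plan is to classify the equi-difference codewords of an arbitrary $C\in CAC^e(L,w)$ by the multiplicative order of their generators and to exploit the pairwise disjointness of the difference sets. Write $I=\{0,g,2g,\dots,(w-1)g\}$ with $g$ of order $n_I=L/\gcd(g,L)$ in $Z_L$, noting $n_I\ge w$ since $I$ has $w$ distinct elements. A direct inspection of $d^*(I)=\{\pm jg:1\le j\le w-1\}$ shows that if $n_I\ge 2w-1$ the $2(w-1)$ residues are distinct, so $|d^*(I)|=2w-2$ (non-exceptional), while if $w\le n_I\le 2w-2$ the residues $\pm jg$ cover exactly $H_{n_I}\setminus\{0\}$, where $H_n$ denotes the unique order-$n$ subgroup of $Z_L$; in this case $|d^*(I)|=n_I-1$ and $n_I\in\Omega(L,w)$ (exceptional).

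Next I would observe that $d^*(I)$ depends only on $n_I$ in the exceptional case, so $C$ contains at most one exceptional codeword per order. Let $N_C\subseteq\Omega(L,w)$ be the set of orders of exceptional generators appearing in $C$. Using $H_n\cap H_{n'}=H_{\gcd(n,n')}$ in the cyclic group $Z_L$, disjointness of $H_n\setminus\{0\}$ and $H_{n'}\setminus\{0\}$ is equivalent to $\gcd(n,n')=1$; hence $N_C$ is a pairwise coprime subset of $\Omega(L,w)$.

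Summing $|d^*(I)|$ over $I\in C$ and using $\bigsqcup_{I\in C}d^*(I)\subseteq Z_L^*$ gives
\begin{equation*}
(2w-2)|C_2|+\sum_{n\in N_C}(n-1)\le L-1.
\end{equation*}
Adding $(2w-2)|N_C|$ to both sides and rearranging yields $(2w-2)|C|\le L-1+\sum_{n\in N_C}(2w-1-n)$, a bound that holds for every equi-difference CAC.

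The main obstacle is to replace the sum over $N_C$ by the sum over $\Omega^*(L,w)$. First I would verify that $\Omega^*$ is itself pairwise coprime: if $\gcd(p_1,p_2)>1$ for some $p_1,p_2\in\Omega^*$, the minimality clause in the definition forces $p_1\le p_2$ and $p_2\le p_1$, hence $p_1=p_2$. The remaining step is an exchange argument driven by this minimality: for each $n\in N_C\setminus\Omega^*$ one locates a prime $q\mid n$ together with $m_q:=\min\{p\in\Omega : q\mid p\}<n$, then descends (possibly iteratively) through minimal representatives until arriving at an element of $\Omega^*$ that can be swapped in for $n$ while preserving the coprimality of $N_C\setminus\{n\}$ and weakly increasing $\sum(2w-1-n)$. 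Controlling this exchange so that coprimality is maintained and the image lands inside $\Omega^*$ is the delicate technical point on which the theorem rests.
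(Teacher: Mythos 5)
Your argument coincides with the paper's own proof up to the last step: the identification of $d(I)$ for an exceptional $I$ with the punctured subgroup $H_{n_I}\setminus\{0\}$, $n_I=L/\gcd(g,L)\in\Omega(L,w)$, the pairwise coprimality of the occurring orders, the count $(2w-2)|C|\le L-1+\sum_{n\in N_C}(2w-1-n)$, and the coprimality of $\Omega^*$ are all correct and are exactly what the paper does. The gap is the step you yourself flag as ``the delicate technical point,'' and it is not merely delicate: the inequality $\sum_{n\in N_C}(2w-1-n)\le\sum_{p\in\Omega^{*}(L,w)}(2w-1-p)$ that any exchange argument would have to deliver is \emph{false}. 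Take $L=420$, $w=12$. The divisors of $420$ in $[w,2w-2]=[12,22]$ are $\Omega(L,w)=\{12,14,15,20,21\}$, and since $12$ shares a factor with each of the other four and is the smallest, $\Omega^{*}(L,w)=\{12\}$, so the right-hand side is $23-12=11$. But $N_C=\{14,15\}$ is a pairwise coprime subset of $\Omega(L,w)$, realizable by the exceptional weight-$12$ codewords generated by $30$ and $28$ (whose difference sets are the punctured subgroups of orders $14$ and $15$, which are disjoint), and it gives $\sum_{n\in N_C}(23-n)=9+8=17>11$. Your proposed descent cannot be repaired: both $14$ and $15$ descend to the same element $12$, and indeed $|N_C|=2>1=|\Omega^{*}|$, so no injection $N_C\to\Omega^{*}$ exists at all.

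You should know that the paper's proof asserts precisely this inequality in one line with no justification, so the gap you isolated is present there too; in the example above the theorem's numerical conclusion happens to survive only because $(419+17)/22$ and $(419+11)/22$ have the same integer part, a coincidence that neither your argument nor the paper's establishes in general. A correct statement would have to replace $\sum_{p\in\Omega^{*}}(2w-1-p)$ by $\max_{N}\sum_{n\in N}(2w-1-n)$, the maximum taken over pairwise coprime subsets $N\subseteq\Omega(L,w)$; with that modification your argument (and the paper's) closes completely, but as written the final step fails.
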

\begin{proof}
Let $C$ be an $(L,w)$-equi-difference CAC, in which there are $E$ exceptional codewords. Suppose $C=C_{1}\cup C_{2}$, where $C_{1}=\{I\in C|I$ is non exceptional$\}$ and $C_{2}=\{I\in C|I$ is exceptional$\}$. Then $|C_{2}|=E$. For $i=1,2,\cdots,E$, denote the $ith$ exceptional codeword by $I_{i}$ and let $|d^{*}(I_{i})|=f_{i}$. Then we have the following inequalities:

\begin{equation}
(2w-2)|C_{1}|+\sum\limits_{i=1}^{E}f_{i}\leq L-1
\end{equation}
\begin{equation}
(2w-2)(|C_{1}|+E)\leq L-1+\sum\limits_{i=1}^{E}(2w-2-f_{i})
\end{equation}
In fact, for every exceptional codeword $I_{i}$ with generator $g_{i}$, $d(I_{i})$ is a subgroup with generator gcd$(L,g_{i})$ and then $|d(I_{i})|$ is a divisor of $L$. Let $|d(I_{i})|=p_{i}$, then
 \begin{equation}
 \label{equation8}
 (2w-2)(|C_{1}|+E)\leq L-1+\sum\limits_{i=1}^{E}(2w-1-p_{i}).
 \end{equation}
 For every two exceptional codewords $I_{1},I_{2}$, $d^*(I_1)\bigcap d^*(I_2)=\emptyset$ if and only their generators are relatively prime. This implies that $gcd(|d(I_{1})|,|d(I_{2})|)=1$.
So each element in $\{p_{1},p_{2},\cdots, p_{E}\}$ satisfies
\begin{center} (i) $p_{i}|L$; (ii) gcd$(p_{i},p_{j})=1$, for$\  i\neq j$; (iii) $w\leq p_{i}< 2w-1$.\end{center}
So $\sum\limits_{i=1}^{E}(2w-1-p_{i})\leq\sum\limits_{p\in\Omega^{*}(L,w)}(2w-1-p)$, $|C|=|C_{1}|+E$, then eq.(\ref{equation8}) turns to be
\begin{equation}
|C|\leq\frac{L-1+\sum\limits_{p\in \Omega^{*}(L,w)}(2w-1-p)}{2w-2}.
\end{equation}
So we have
\begin{equation}
M^{e}(L,w)\leq \frac{L-1+\sum\limits_{p\in \Omega^{*}(L,w)}(2w-1-p)}{2w-2}.
\end{equation}
\end{proof}

Using this bound, we can deal with the Theorems 5-8 in \cite{RefJ19} and Corollary 7 in \cite{RefJ1} for equi-difference condition easily. And we also get the following Corollary\ref{corollary1} immediately.
\begin{corollary}\label{corollary1}
Let $L$ be an integer factorized as $2^{a}3^{b}5^{c}7^{d}l$, where $l$ is not divisible by $2,3,5$ or $7$. Then we have
\begin{equation*}
 M^{e}(L,w)\leq\left\{\begin{array}{cc}
                          \lfloor\frac{L+2}{4}\rfloor,& \textrm{\emph{for}} \ w=3;\\
                          \lfloor\frac{L+4}{6}\rfloor, &\textrm{\emph{for}} \ w=4; \\
                         \lfloor\frac{L+8}{8}\rfloor, &\textrm{\emph{for}} \ w=5; \\
                         \lfloor\frac{L+8}{10}\rfloor, & \textrm{\emph{for}} \ w=6.
                         \end{array}\right.
 \end{equation*}
\end{corollary}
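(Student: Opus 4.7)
The plan is to specialize Theorem~\ref{equi-difference upperbound} to each $w\in\{3,4,5,6\}$ and to maximize the correction term $S:=\sum_{p\in\Omega^{*}(L,w)}(2w-1-p)$ over all admissible factorizations $L=2^{a}3^{b}5^{c}7^{d}l$. The key preliminary observation is that $\Omega(L,w)\subseteq\{w,w+1,\ldots,2w-2\}$, so for $w\leq 6$ every candidate divisor lies in $\{3,4,\ldots,10\}$ and has no prime factor exceeding $7$. Consequently the coprime factor $l$ is irrelevant, and $S$ depends only on $(a,b,c,d)$. It suffices to show that the maximum value of $S$ is $3,5,9,9$ for $w=3,4,5,6$ respectively, whereupon $(L-1+S)/(2w-2)$ reduces to the four claimed expressions and taking floors finishes the proof because $M^{e}(L,w)$ is an integer.

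For $w=3$, $\Omega(L,3)\subseteq\{3,4\}$ and $\gcd(3,4)=1$, so both elements may lie simultaneously in $\Omega^{*}$, producing $S\leq(5-3)+(5-4)=3$. For $w=4$, $\Omega(L,4)\subseteq\{4,5,6\}$; as soon as $4\in\Omega$ the minimality clause in the definition of $\Omega^{*}$ forces $6\notin\Omega^{*}$ (since $\gcd(4,6)=2$ and $6>4$), so the extremal configuration is $\Omega^{*}=\{4,5\}$ with $S=5$. For $w=5$, $\Omega(L,5)\subseteq\{5,6,7,8\}$; applying the same mechanism to the pair $(6,8)$ gives $\Omega^{*}\subseteq\{5,6,7\}$ and $S\leq 9$.

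The delicate case is $w=6$, where $\Omega(L,6)\subseteq\{6,7,8,9,10\}$ and $6$ shares a prime with each of $8,9,10$. I would split on whether $6\mid L$. When $6\in\Omega$, the elements $8,9,10$ are each excluded from $\Omega^{*}$ (each exceeds $6$ while sharing a prime with it), so $\Omega^{*}\subseteq\{6,7\}$ and $S\leq 5+4=9$. When $6\notin\Omega$, either $a=0$, which removes $8$ and $10$ from $\Omega$ and leaves $\Omega\subseteq\{7,9\}$ with $S\leq 4+2=6$, or $b=0$, which removes $9$ and leaves $\Omega\subseteq\{7,8,10\}$; here $\Omega^{*}\subseteq\{7,8\}$ since the pair $(8,10)$ is pruned as before, giving $S\leq 4+3=7$. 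Hence $S\leq 9$ in every case.

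The main obstacle is this last case analysis: one must verify that no configuration omitting $6$ from $\Omega$ can match the value attained when $6$ is present. The crucial point is that the divisibility condition forcing $6\notin\Omega$ simultaneously removes enough of $\{8,9,10\}$ from $\Omega$ itself to leave too few competitive terms, so the rival sums never exceed $7$. Once this is established, the four inequalities of the corollary follow by direct substitution into Theorem~\ref{equi-difference upperbound}.
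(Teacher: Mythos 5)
Your proposal is correct and follows the same route the paper intends: the corollary is presented as an immediate specialization of Theorem~\ref{equi-difference upperbound}, and your maximization of the correction term $\sum_{p\in\Omega^{*}(L,w)}(2w-1-p)$ over the possible divisors in $\{w,\dots,2w-2\}$ (yielding $3,5,9,9$ for $w=3,4,5,6$) is exactly the computation that justifies it. One cosmetic slip: for $w=5$, if $8\mid L$ but $6\nmid L$ then $8\in\Omega^{*}$, so $\Omega^{*}\not\subseteq\{5,6,7\}$ as written; however the sum is then at most $4+2+1=7\leq 9$, so your conclusion stands.
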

 Following, we will give three optimal constructions for equi-difference CACs. One illustrate the superiority of this new bound. The other two show the enlarged range of optimal equi-difference CACs constructed.

\section{Constructions on optimal equi-difference CACs}
We use the following notation in \cite{RefJ3}.
For a subgroup $H$ of $G$ with $\frac{|G|}{|H|}=f$, if each coset $H_{j}$ of $H$ contains exactly one element in $\{i_{1},i_{2},\cdots,i_{f}\}$ for $j=1,2,\cdots,f$, then $\{i_{1},i_{2},\cdots,i_{f}\}$ is said to form a system of distinct representatives( SDR for short) of $\{H_{1},H_{2},\cdots,H_{f}\}$. Let $Z_{L}^{\times}=\{a\in Z_{L}|gcd(a,L)=1\}$.

\emph{condition 1}
There exists a subgroup $H$ of $Z_{L}^{\times}$ such that $-1\in H$, $|H|=\frac{|Z_{L}^{\times}|}{(w-1)}$
and $\{1,2,\cdots,w-1\}$ forms a SDR of $H$'s cosets.

\emph{condition 2}
There exists a subgroup $H$ of $Z_{L}^{\times}$ such that $-1\overline{\in}H$, $|H|=\frac{|Z_{L}^{\times}|}{2(w-1)}$ and $\{\pm1,\pm2,\cdots,\pm(w-1)\}$ forms a SDR of $H$'s cosets.
\subsection{Optimal construction on equi-difference CACs of length $L=p$ }
\begin{lemma}\label{lemma1}
\cite{RefJ3}: Let $p=2(w-1)m+1$ be a prime number and suppose that $\{1,2,\cdots,w-1\}$ forms a SDR of $\{H_{j}^{w-1}(p):j=0,\cdots,w-2\}$. Let $\alpha$ be a primitive element in the finite field $Z_{p}$ and let $g=\alpha^{w-1}$. Then the $m$ codewords of weight $w$ generated by $1,g,g^{2},\cdots,g^{m-1}$ form an equi-difference $(2(w-1)m+1,w)-CAC$.
\end{lemma}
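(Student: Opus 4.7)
The plan is to verify directly that the $m$ equi-difference codewords $I_0, I_1, \ldots, I_{m-1}$, where $I_i$ is generated by $g^i$, have pairwise disjoint difference sets in $\mathbb{Z}_p^*$. First, I would lay out the group-theoretic setup. Since $p - 1 = 2(w-1)m$, the element $g = \alpha^{w-1}$ generates a cyclic subgroup $H = \langle g \rangle \subseteq \mathbb{Z}_p^\times$ of order $2m$ and index $w-1$. The cosets of $H$ are exactly the sets $H_j^{w-1}(p)$ of the hypothesis, so the SDR condition is the statement that $\{1, 2, \ldots, w-1\}$ is a transversal for these $w-1$ cosets. I would also record the crucial identity $-1 = \alpha^{(p-1)/2} = \alpha^{(w-1)m} = g^m$, which shows $-1 \in H$; in particular, $k$ and $-k$ lie in the same coset of $H$ for any $k \in \mathbb{Z}_p^\times$.

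Next, I would describe each difference set coset by coset. By definition, $d^*(I_i) = \{\pm k g^i : 1 \leq k \leq w-1\}$, and since $-1 = g^m \in H$ we can rewrite this as $\{k g^i,\ k g^{i+m} : 1 \leq k \leq w-1\}$. Multiplication by $g^i \in H$ preserves every coset of $H$, so the two elements $kg^i$ and $kg^{i+m}$ both lie in the coset $kH$. Combined with the SDR hypothesis, this shows that $d^*(I_i)$ meets each of the $w-1$ distinct cosets $H, 2H, \ldots, (w-1)H$ in exactly the two-element set $\{kg^i, kg^{i+m}\}$, and therefore $|d^*(I_i)| = 2(w-1)$ (so every $I_i$ is non-exceptional and has full weight $w$, which also needs $w \leq p$, guaranteed by $p \geq 2w-1$).

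Finally, to establish the CAC property, I would argue the disjointness $d^*(I_i) \cap d^*(I_j) = \emptyset$ for $0 \leq i \neq j \leq m-1$ coset-by-coset. Since $d^*(I_i)$ and $d^*(I_j)$ occupy the same $w-1$ cosets, a common element would force $\{kg^i, kg^{i+m}\} \cap \{kg^j, kg^{j+m}\} \neq \emptyset$ for some $k$. Cancelling the unit $k$ and using that $g$ has order exactly $2m$ in $\mathbb{Z}_p^\times$, this would require $i \equiv j \pmod{2m}$ or $i \equiv j + m \pmod{2m}$, both impossible when $0 \leq i \neq j \leq m-1$. The proof is really just bookkeeping, and the only step where something could go wrong is the subtle use of $-1 \in H$ to pair $kg^i$ and $kg^{i+m}$ into the same coset; once that is in hand, the order of $g$ immediately forces the exponents to avoid collision. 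As a sanity check, the total count $m \cdot 2(w-1) = p - 1 = |\mathbb{Z}_p^*|$ shows that the construction is in fact tight.
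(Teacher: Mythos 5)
Your proof is correct: the identity $-1=\alpha^{(p-1)/2}=g^{m}\in H=\langle g\rangle$, the coset decomposition of each $d^{*}(I_i)$ via the SDR hypothesis, and the order-$2m$ argument ruling out $i\equiv j$ or $i\equiv j+m \pmod{2m}$ for $0\leq i\neq j\leq m-1$ are exactly what is needed. The paper itself states this lemma without proof (citing \cite{RefJ3}), but your argument is essentially the $s=1$ specialization of the paper's own proof of Theorem~\ref{theorem1}, which uses the same partition $Z_p^{*}=\bigcup_i \alpha^i H$ with $H=\bigcup_{t}g^{t}\{\pm1,\dots,\pm(w-1)\}$.
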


The $p$'s satisfying SDR in Lemma \ref{lemma1} are rare. Following, we will give a generalized construction in which the range of $p$'s will be enlarged.
 \begin{theorem}\label{theorem1}
 Let $p=2(w-1)ms+1$ be a prime number and $H$ a subgroup of $Z_{p}^{*}$ with order $2m(w-1)$. Suppose that $\{1,2,\cdots,w-1\}$ forms a SDR of $\{N_{1},N_{2},\cdots,N_{w-1}\}$, where $N_{1}$ is a subgroup of $H$ with order $2m$ and $N_{j}$s are $N_{1}$'s cosets in $H$. Let $\alpha$ be a primitive element of $Z_{p}$ and let $g_{ij}=\alpha^{i+s(w-1)j}$ for ${0\leq i\leq s-1,0\leq j\leq m-1}$. Then the $sm$ codewords of weight $w$ generated by $g_{ij}$ form an optimal equi-difference $(2(w-1)ms+1,w)-CAC$.
\end{theorem}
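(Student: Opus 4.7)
The plan is to verify that the $sm$ codewords $I_{ij}$ generated by $g_{ij} = \alpha^{i+s(w-1)j}$ form a valid equi-difference CAC, and then observe that this count already matches the upper bound of Theorem~\ref{equi-difference upperbound}, which gives optimality. A useful preliminary observation is that $|N_1|=2m$ is even and $N_1$ is cyclic (being a subgroup of the cyclic group $Z_p^{*}$), so $N_1$ contains the unique element of order two in $Z_p^{*}$, namely $-1$. Hence $-1 \in N_1 \subseteq H$ comes for free, and the SDR hypothesis then yields $\{\pm 1, \pm 2, \ldots, \pm(w-1)\} \subseteq H$.

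The heart of the argument is to show that the difference sets $d^{*}(I_{ij}) = g_{ij}\cdot\{\pm 1, \pm 2, \ldots, \pm(w-1)\}$ partition $Z_p^{*}$. I would first decompose $Z_p^{*} = \bigsqcup_{i=0}^{s-1} \alpha^i H$. Because $\alpha^{s(w-1)j} \in N_1 \subseteq H$ and $\{\pm 1, \ldots, \pm(w-1)\} \subseteq H$, the set $d^{*}(I_{ij})$ lies entirely inside the coset $\alpha^i H$, which already gives disjointness across distinct values of $i$. What remains is to show that, for each fixed $i$, the $m$ difference sets indexed by $j = 0, 1, \ldots, m-1$ partition $\alpha^i H$.

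After scaling by $\alpha^{-i}$, this is equivalent to establishing
\[
\bigsqcup_{j=0}^{m-1} \beta^{j}\cdot\{\pm 1, \pm 2, \ldots, \pm(w-1)\} = H,
\]
where $\beta = \alpha^{s(w-1)}$ generates $N_1$ and $\beta^m = -1$. Writing the SDR as $k_{\tau} \in \alpha^{\tau s} N_1$ for $\tau = 0, 1, \ldots, w-2$ (a permutation of $\{1,\ldots,w-1\}$), I would exploit $-k_{\tau} = \beta^m k_{\tau}$ to rewrite
\[
\bigsqcup_{j=0}^{m-1} \beta^{j}\{k_{\tau}, -k_{\tau}\} \;=\; k_{\tau}\cdot\bigsqcup_{j=0}^{m-1}\{\beta^{j}, \beta^{j+m}\} \;=\; k_{\tau} N_1 \;=\; \alpha^{\tau s} N_1,
\]
where disjointness in $j$ follows because $\{0, 1, \ldots, m-1\}$ and $\{m, m+1, \ldots, 2m-1\}$ are disjoint modulo $2m$. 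Taking the union over $\tau = 0, 1, \ldots, w-2$ exhausts all $w-1$ cosets of $N_1$ in $H$ by the SDR, yielding $H$.

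For optimality, since $p$ is prime and $p = 2(w-1)ms + 1 \ge 2w - 1$, the only divisors of $p$ are $1$ and $p$, neither of which lies in the interval $[w, 2w-1)$. Hence $\Omega^{*}(p,w) = \emptyset$ in Theorem~\ref{equi-difference upperbound}, and the bound collapses to $M^{e}(p,w) \le (p-1)/(2w-2) = sm$, exactly matching the size of the construction. The main obstacle is the coset-partitioning argument of the third paragraph; every other step is essentially bookkeeping once one recognizes that the hypothesis forces $-1 \in N_1$ and that $\{\pm 1, \ldots, \pm(w-1)\}$ places exactly two elements in each coset of $N_1$ within $H$.
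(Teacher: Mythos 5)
Your proposal is correct and follows essentially the same route as the paper: decompose $Z_p^{*}$ into the cosets $\alpha^i H$, use $-1\in N_1$ (the paper gets this via $g^m=\alpha^{(p-1)/2}=-1$) together with the SDR to write $H$ as the disjoint union of the translates $g^j\{\pm1,\dots,\pm(w-1)\}$, and then match $|C|=sm=(p-1)/(2w-2)$ against the bound of Theorem~\ref{equi-difference upperbound}. You merely fill in the coset-partition bookkeeping and the observation $\Omega^{*}(p,w)=\emptyset$ more explicitly than the paper does.
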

\begin{proof}
Because $H$ is a subgroup of $Z_{p}^{*}$ with order $2m(w-1)$, then there exists an primitive element $\alpha$ of $Z_{p}$ such that $H=(\alpha^{s})$. Then $$Z_{p}^{*}=\bigcup\limits_{i=0}^{s-1}\alpha^{i}H.$$
Let $g=\alpha^{s(w-1)}$ and $N_{1}=(g)$ be the subgroup of $H$ with order $2m$. For $\{1,2,\cdots,w-1\}$ forms a SDR of the cosets of $N_{1}$, then
$$H=\bigcup\limits_{j=1}^{w-1}jN_{1}.$$ And for the the reason that the order of $N_{1}$ is $2m$ and $g^{m}=\alpha^{s(w-1)m}=\alpha^{\frac{p-1}{2}}=-1$, so $N_{1}=\{\pm1,\pm g,\pm g^{2},\cdots,\pm g^{m-1}\}$.
 Let $A=\{\pm1,\pm2,\cdots,\pm(w-1)\}$, then $H=\cup_{j=1}^{w-1}jN_{1}=\cup_{t=0}^{m-1}g^{t}A.$
 So $$Z_{p}^{*}=\bigcup_{i=0}^{s-1}\alpha^{i}H=\bigcup\limits_{i=0}^{s-1}\bigcup\limits_{j=0}^{m-1}\alpha^{i}g^{j}A.$$ Let $\Gamma(C)=\{\alpha^{i}g^{j},0\leq i\leq s-1,0\leq j\leq m-1\}$ be the set of generators of $C$,
 then $I_{(i,j)}=\{0,\alpha^{i}g^{i},2\alpha^{i}g^{i},\cdots,(w-1)\alpha^{i}g^{i}\}$, for $0\leq i\leq s-1,0\leq j\leq m-1$. And $$d^{*}(I_{(i,j)})\cap d^{*}(I_{(k,s)})=\emptyset$$ for $(i,j)\neq(k,s)$. So $C=\{I_{(i,j)}|0\leq i\leq s-1,0\leq j\leq m-1 \}$ forms an equi-difference CAC. The size of $C$ is $$|C|=sm=\frac{2m(w-1)s}{2w-2}=\frac{p-1}{2w-2}.$$ By Theorem \ref{equi-difference upperbound} we can see that $C$ is an optimal equi-difference CAC. And $$Z_{p}^{*}=\bigcup\limits_{i=0}^{s-1}\bigcup\limits_{j=0}^{w-2}d^{*}(I_{(i,j)}).$$  So $C$ is an optimal and tight equi-difference CAC.
 \end{proof}

For $s=1$ in Theorem \ref{equi-difference upperbound}, it is exactly the construction mentioned in Lemma \ref{lemma1}. So it is a generalized construction. And the following Example\ref{ex1} shows that it is a real generalization.
\begin{example}\label{ex1}

 Let $p=919$, and in the expression $p=2(w-1)ms+1$, let $w=4,m=51,s=3$ be the parameters and $\alpha=7$ the primitive element of $Z_{p}^{*}$.  Let $N_{1}=(\alpha^{9})$ and $H=(\alpha^{3})$ be the subgroup generated by $\alpha^{9}$ and $\alpha^{3}$, respectively. $N_{1}$ is a subgroup of $H$. We can check that $\{1,2,3\}$ forms a SDR of the cosets $N_{1},N_{2},N_{3}$ in $H$. The 153 codewords generated by the generators form an optimal (919,4)-CAC $C$. The set of the generators is

  $\Gamma(C)=\{1,7,49,317,381,388,878,635,769,788,34,238,
   747,669,88,616,703,\\326,
   444,453,414,141,237,740,585,690,235,726,8,56,392,698,
   291,199,706,347,\\591,
   485,638,790,272,66,462,757,704,333,110,770,795,867,555,209,
   58,406,85,\\642,294,
   64,448,379,70,490,673,134,19,133,204,509,806,338,528,20,
   542,118,\\826,880,646,
   846,503,764,753,464,491,680,48,336,514,512,827,275,560,244,\\
   789,153,152,145,
   713,396,15,866,548,160,660,25,175,607,573,335,348,598,\\510,
   36,252,845,384,850,
   436,420,183,362,804,114,798,305,297,241,190,411,\\120,495,
   708,361,685,200,481,
   261,908,842,27,189,404,288,178,327,315,367,\\731,603,545,139 \}$.
\end{example}

\subsection{Optimal constructions on equi-difference CACs of length $L=pq$}
After we have constructed CACs for prime length $L=p$ based on Theorem \ref{theorem1}, we will give a recursive construction of CACs for two prime length $L=pq$ in order to enlarge the range of CACs further.
\begin{theorem}\label{theorem2}
Let $C_{1}$ be an optimal tight $(p_{1},w)$-equi-difference CAC with $m_{1}$ codewords and $C_{2}$ an optimal tight $(p_{2},w)$-equi-difference CAC with $m_{2}$ codewords. Then set $$C=\{I_{(k,i)},J_{(j,j)}|1\leq k\leq m_{1},0\leq i\leq p_{2}-1,1\leq j\leq m_{2}\}$$ forms an $(p_{1}p_{2},w)$ optimal equi-difference CAC with $m_{1}p_{2}+m_{2}$ codewords, where $I_{(k,i)}=(0,a_{1}+ip_{1},a_{2}+2ip_{1},\cdots,a_{w-1}+(w-1)ip_{1})$ for $I_{k}=(0,a_{1},a_{2},\cdots,a_{w-1)})\in C_{1}$, $J_{(j,j)}=(0,b_{1}p_{1},b_{2}p_{1},\cdots,b_{w-1}p_{1})$ for $J_{j}=(0,b_{1},b_{2},\cdots,b_{w-1})\in C_{2}$.
\end{theorem}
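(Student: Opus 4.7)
The plan is to interpret everything through the Chinese Remainder Theorem decomposition $Z_{p_1p_2}\cong Z_{p_1}\times Z_{p_2}$. Writing $g_k$ for a generator of $I_k\in C_1$ and $h_j$ for a generator of $J_j\in C_2$, the codeword $I_{(k,i)}$ is exactly the equi-difference codeword of $Z_{p_1p_2}$ generated by $g_k+ip_1$, while $J_{(j,j)}$ is generated by $h_jp_1$. Under CRT, $g_k+ip_1$ has first coordinate $g_k$ and second coordinate $(g_k+ip_1)\bmod p_2$, whereas $h_jp_1$ has first coordinate $0$ and second coordinate $h_jp_1\bmod p_2$. I will (a) check that each listed set is a genuine weight-$w$ equi-difference codeword, (b) verify pairwise disjointness of the difference sets, and (c) count and compare against the upper bound of Theorem \ref{equi-difference upperbound}.

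For (a), reducing $I_{(k,i)}$ modulo $p_1$ recovers $I_k$, whose $w$ elements are distinct in $Z_{p_1}$, so they remain distinct in $Z_{p_1p_2}$. Likewise the elements of $J_{(j,j)}$ all lie in the subgroup $p_1Z_{p_1p_2}$, which under the natural isomorphism with $Z_{p_2}$ corresponds precisely to $J_j$, giving $w$ distinct elements. For (b), I split into three cases. Any difference in $d^{*}(I_{(k,i)})$ has nonzero first CRT coordinate $(j-\ell)g_k\bmod p_1$, whereas $d^{*}(J_{(j,j)})$ lives entirely in $p_1Z_{p_1p_2}$ and has first coordinate zero, so the $I$-family and the $J$-family separate automatically. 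For distinct $J_{(j,j)}$ and $J_{(j',j')}$, projection to $Z_{p_2}$ reduces disjointness to the CAC property of $C_2$. For $I_{(k,i)}$ versus $I_{(k',i')}$, reducing the differences mod $p_1$ returns $d^{*}(I_k)$ and $d^{*}(I_{k'})$; when $k\ne k'$ this alone suffices by the CAC property of $C_1$, and when $k=k'$ but $i\ne i'$, a hypothetical common difference $(j-\ell)(g_k+ip_1)\equiv (j'-\ell')(g_k+i'p_1)\pmod{p_1p_2}$ forces $j-\ell\equiv j'-\ell'\pmod{p_1}$ (hence equal, using $p_1\ge 2w-1$ and $\gcd(g_k,p_1)=1$), whereupon the residual congruence $(j-\ell)(i-i')p_1\equiv 0\pmod{p_2}$ together with $\gcd(p_1,p_2)=1$ and $|i-i'|<p_2$ forces $i=i'$, a contradiction.

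For (c), tightness of $C_1$ and $C_2$ gives $m_1=(p_1-1)/(2w-2)$ and $m_2=(p_2-1)/(2w-2)$, so the total size is $|C|=m_1p_2+m_2=(p_1p_2-1)/(2w-2)$. Under the natural hypothesis $p_1,p_2\ge 2w-1$, which is already needed for $C_1$ and $C_2$ to be free of exceptional codewords, the set $\Omega^{*}(p_1p_2,w)$ in Theorem \ref{equi-difference upperbound} is empty, so the bound collapses to $(p_1p_2-1)/(2w-2)$ and $C$ attains it. The main obstacle I foresee is the $k=k'$, $i\ne i'$ subcase of disjointness: all $p_2$ lifts $g_k+ip_1$ share the same residue mod $p_1$, so the first CRT coordinate is useless for distinguishing them, and one must exploit the coprimality of $p_1$ and $p_2$ together with the size bound $w-1<\min(p_1,p_2)$ to ensure that the differences are separated by their second CRT coordinate.
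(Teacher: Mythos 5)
Your proposal is correct and follows essentially the same route as the paper: the paper's three disjointness cases (separating the $I$-family from the $J$-family by reduction mod $p_{1}$, separating distinct $J$'s via $Z_{p_{2}}$, and handling $k=k'$, $i\ne i'$ through the residual congruence mod $p_{2}$) are exactly your argument, merely written as explicit congruences rather than in CRT-coordinate language, and the final count against Theorem \ref{equi-difference upperbound} is identical. Your explicit remark that $p_{1},p_{2}\geq 2w-1$ is needed for $\Omega^{*}(p_{1}p_{2},w)$ to be empty and for $m_{i}=(p_{i}-1)/(2w-2)$ is a small gain in rigor over the paper, which leaves this implicit.
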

\begin{proof} (1) $d^{*}(I_{(k_{1},i_{1})})\cap d^{*}(I_{(k_{2},i_{2})})=\emptyset$ for $(k_{1},i_{1})\neq(k_{2},i_{2})$\\

 Let
 $I_{(k,i)}=(0,g_{k}+ip_{1},2g_{k}+2ip_{1},\cdots,(w-1)g_{k}+(w-1)ip_{1})$ in $C$ for $I_{k}=(0,g_{k},2g_{k},\cdots,(w-1)g_{k})\in C_{1}$. Then
 $d^{*}(I_{(k,i)})=\{\pm(g_{k}+ip_{1}),\pm(2g_{k}+2ip_{1}),\cdots,\pm((w-1)g_{k}+(w-1)ip_{1})\}$. For $(k_{1},i_{1})\neq(k_{2},i_{2})$,
 if there exists some $xg_{k_{1}}+xi_{1}p_{1}=yg_{k_{2}}+yi_{2}p_{1}$ for $-(w-1)\leq x,y\leq(w-1)$, then

 \begin{equation}\label{eq.1}
 xg_{k_{1}}-yg_{k_{2}}=(yi_{2}-xi_{1})p_{1}\pmod {p_{1}p_{2}}
 \end{equation}.
 So $xg_{k_{1}}=yg_{k_{2}}\pmod {p_{1}}$. So $x=y,g_{k_{1}}=g_{k_{2}}$ for $xg_{k_{1}}=yg_{k_{2}}\in Zp_{1}^{*}$ and $xg_{k_{1}}\in d^{*}(I_{k_{1}}),yg_{k_{2}}\in d^{*}(I_{k_{2}})$. So $k_{1}=k_{2}$. Then in eq.(\ref{eq.1}), we have $yi_{2}=xi_{1}\pmod {p_{1}p_{2}}$. So $x=y$ and $i_{1}=i_{2}$ for $-(w-1)\leq x,y\leq(w-1)$ and $0\leq i_{1},i_{2}\leq p_{2}-1$. This contracts with $(k_{1},i_{1})\neq(k_{2},i_{2})$. So $d^{*}(I_{(k_{1},i_{1})})\cap d^{*}(I_{(k_{2},i_{2})})=\emptyset$ for $(k_{1},i_{1})\neq(k_{2},i_{2})$.\\

 (2) $d^{*}(J_{(j_{1},j_{1})})\cap d^{*}(J_{(j_{2},j_{2})})=\emptyset$ for $j_{1}\neq j_{2}$\\

 Let $J_{(j,j)}=(0,b_{j}p_{1},2b_{j}p_{1},\cdots,(w-1)b_{j}p_{1})$ for $J_{j}=(0,b_{j},2b_{j},\cdots,(w-1)b_{j})\in C_{2}$. Then $d^{*}(J_{(j,j)})=\{\pm b_{1}p_{1},\pm2b_{1}p_{1},\cdots,\pm(w-1)b_{1}p_{1}\}$
 For $j_{1}\neq j_{2}$, if there exists $-(w-1)\leq x,y\leq(w-1)$, such that $xb_{j_{1}}p_{1}=yb_{j_{2}}p_{1}\pmod {p_{1}p_{2}}$. We also get that $xb_{j_{1}}-yb_{j_{2}}=0\pmod {p_{2}}$.So $xb_{j_{1}}=yb_{j_{2}}\in d^{*}(J_{(j_{1},j_{1})})\cap d^{*}(J_{(j_{2},j_{2})}$. This contracts with $d^{*}(J_{(j_{1},j_{1})})\cap d^{*}(J_{(j_{2},j_{2})}=\emptyset$. So $d^{*}(J_{(j_{1},j_{1})})\cap d^{*}(J_{(j_{2},j_{2})})=\emptyset$ for $j_{1}\neq j_{2}$.\\

 (3) $d^{*}(I_{(k,i)})\cap d^{*}(J_{(j,j)})=\emptyset$ for any $k,i,j$\\

 If there exists some $x(g_{k}+ip_{1})=y(b_{j}p_{1}\pmod {p_{1}p_{2}}$, then $xg_{k}=0\pmod {p_{1}}$ which contracts with $xg_{k}\in d^{*}(I_{k})$. So $d^{*}(I_{(k,i)})\cap d^{*}(J_{(j,j)})=\emptyset$.
 By (1)(2)(3) we get that $C$ is an equi-difference CAC.\\

(4) $|C|=m_{1}p_{2}+m_{2}$. On the other aspect, by Lemma \ref{equi-difference upperbound}
 $|C|\leq\lfloor\frac{L-1}{2w-1}\rfloor=\frac{p_{1}p_{2}-1}{2w-2}=\frac{p_{2}(p_{1}-1)+p_{2}-1}{2w-2}=\frac{p_{1}-1}{2w-2}p_{2}+\frac{p_{2}-1}{2w-2}=m_{1}p_{2}+m_{2}$.

 So $C$ is an optimal equi-difference CAC.
 \end{proof}

Further more, we will give another optimal construction of CACs for two prime length $L=pq$. It indicates that the new upper bound given in Theorem \ref{theorem1} is lower than the known one listed in Lemma \ref{known upper bound11} and is more convenient than the one listed in Lemma \ref{known upper bound22}.
\begin{theorem}
Let $L=pq$, where $q=2(w-1)f+1$ and $w\leq p\leq 2(w-1)$ are both primes. If $L$ and $q$ satisfy condition 1 or condition 2, then there exists an optimal equi-difference $(L,w)-CAC \ C$ with $|C|=pf+1$.
\end{theorem}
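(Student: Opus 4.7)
The plan is to match the upper bound $pf+1$ with an explicit construction that tiles the nonzero part of $Z_{L}$ into three disjoint families of difference sets, one exceptional and two non-exceptional.

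First I would verify the target via Theorem~\ref{equi-difference upperbound}. Since $L=pq$ with $p$ prime in the range $w\le p\le 2(w-1)$ and $q=2(w-1)f+1\ge 2w-1$, the only divisor of $L$ in $[w,2w-2]$ is $p$, so $\Omega^{\ast}(L,w)=\{p\}$ and the bound becomes
\begin{equation*}
M^{e}(L,w)\leq \frac{pq-1+(2w-1-p)}{2w-2}=\frac{p\cdot 2(w-1)f+2(w-1)}{2(w-1)}=pf+1.
\end{equation*}
Hence it suffices to exhibit a code of size $pf+1$.

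Next I would split $Z_{L}^{\ast}=qZ_{p}^{\ast}\sqcup pZ_{q}^{\ast}\sqcup Z_{L}^{\times}$ via the Chinese Remainder Theorem and construct one family of codewords tiling each piece; since the three pieces are pairwise disjoint subsets of $Z_{L}^{\ast}$, inter-family disjointness of difference sets is automatic. (a) For $qZ_{p}^{\ast}$, take the single exceptional codeword with generator $g_{0}=q$: because $p\leq 2(w-1)$, the residues $\{\pm 1,\dots,\pm(w-1)\}$ exhaust $Z_{p}^{\ast}$, so $d^{\ast}(I_{g_{0}})=qZ_{p}^{\ast}$. (b) For $pZ_{q}^{\ast}$, observe that any generator of the form $pa$ with $a\in Z_{q}^{\ast}$ satisfies $d^{\ast}(I_{pa})=p\cdot(aA\bmod q)$ where $A=\{\pm 1,\dots,\pm(w-1)\}$; tiling $pZ_{q}^{\ast}$ by such codewords is equivalent to constructing an optimal equi-difference $(q,w)$-CAC, which is furnished by Lemma~\ref{lemma1} under condition 1 on $q$ (or by the obvious analogue of Theorem~\ref{theorem1} with $s=1$ under condition 2), yielding $f$ generators $pa_{1},\dots,pa_{f}$. (c) For $Z_{L}^{\times}$, use the subgroup $H$ supplied by the condition on $L$: under condition 1, $|H|=2(p-1)f$ and $-1\in H$, so partition $H$ into pairs $\{h,-h\}$ and take one representative from each pair, obtaining $(p-1)f$ generators; under condition 2, $|H|=(p-1)f$ and $-1\notin H$, so all of $H$ is used as generators.

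Combining the three families gives $|C|=1+f+(p-1)f=pf+1$, matching the upper bound, hence $C$ is optimal (and in fact tight). The main obstacle is the disjointness verification for family (c): for distinct $h_{1},h_{2}$ in the chosen set, one needs $h_{1}A\cap h_{2}A=\emptyset$, equivalently $h_{1}/h_{2}\notin AA^{-1}$, and a short SDR computation shows $AA^{-1}\cap H=\{\pm 1\}$ under condition 1 and $\{1\}$ under condition 2, so the pairing trick (condition 1) or the full-subgroup choice (condition 2) is exactly what is needed; the covering $\bigcup_{i}h_{i}A=Z_{L}^{\times}$ then follows from a size count since each $h_{i}A$ has size $2(w-1)$.
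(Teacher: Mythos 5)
Your proposal is correct and follows essentially the same route as the paper: the same three-way decomposition $Z_{L}^{\ast}=Z_{L}^{\times}\sqcup pZ_{q}^{\ast}\sqcup qZ_{p}^{\ast}$, the same single exceptional codeword generated by $q$, the same use of the subgroup from condition 1 or 2 for the $Z_{L}^{\times}$ part and of the condition on $q$ for the $pZ_{q}^{\ast}$ part, and the same matching against the upper bound of Theorem~\ref{equi-difference upperbound}. Your explicit verification that $AA^{-1}\cap H=\{\pm 1\}$ (resp.\ $\{1\}$) and that inter-family disjointness is automatic actually makes precise some steps the paper leaves implicit.
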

\begin{proof}
It is well known that $Z_{L}^{\times}$ is a multiplicative group. Let $(p)=\{kp \pmod L|k\in Z\}$ and $(q)=\{kq\pmod L|k\in Z\}$ be the additive subgroups of $Z_{L}$. It is clear that $Z_{L}^{*}=Z_{L}^{\times}\cup(p)^{*}\cup(q)^{*}$, where $(p)^{*}=(p)\setminus\{0\},\ (q)^{*}=(q)\setminus\{0\}$. Take $L,q$ satisfying condition 1 for example. We consider the elements in $Z_{L}^{\times}$, $(p)^{*}$ and $(q)^{*}$, respectively.\\

(1) Elements in $Z_{L}^{\times}$.\\

 There exists a subgroup $H$ of $Z_{L}^{\times}$ such that $-1\in H$ and $\{1,2,\cdots,(w-1)\}$ forms a SDR of the cosets. Noted $|H|$ by $2s$. Let $\alpha$ be the generator of $H$. Select each $g_{i}=\alpha^{i}\in H$ as the generator of $I_{i}=\{0,g_{i},2g_{i},\cdots, (w-1)g_{i}\},i=1,2,\cdots,s$, then $d^{*}(I_{i})=\{\pm g_{i},\pm2g_{i},\cdots,\pm(w-1)g_{i}\}$. If $i\neq j,1\leq i,j\leq s$, then $d^{*}(I_{i})\cap d^{*}(I_{j})=\phi$ and
$$\cup_{i=1}^{s}d^{*}(I_{i})=Z_{L}^{\times}.$$
(2) Elements in $(p)^{*}$.

 Because $|(p)^{*}|=|Z_{q}^{*}|$ and $q$ satisfies condition 1, then there exists a subgroup $N$ of $Z_{q}^{*}$ such that $-1\in N$ and $\{1,2,\cdots,(w-1)\}$ forms a distinct representatives of the cosets of $N$. Noted $|N|$ by $2t$. Let $\beta$ be the generator of $N$. Select each $b_{i}=\beta^{i}p\pmod L\in (p)^{*}$ as the generator to construct a codeword $J_{i}=\{0,b_{i},2b_{i},\cdots, (w-1)b_{i}\},i=1,2,\cdots,t$, then $d^{*}(J_{i})=\{\pm b_{i},\pm2b_{i},\cdots,\pm(w-1)b_{i}\}$. If $i\neq j,1\leq i,j\leq t$, then $d^{*}(J_{i})\cap d^{*}(J_{j})=\phi$ and
$$\cup_{i=1}^{t}d^{*}(I_{i})=Z_{q}^{*}.$$
(3)  Elements in $(q)^{*}$.

 If the codeword generated by $q$ is noted by $K$, then
$$|d^{*}(K)|=|(q)^{*}|=p-1\leq2(w-1)-1<2(w-1)$$
so $K$ is exceptional and $$(q)^{*}=d^{*}(K).$$
Then $C=\{I_{1},I_{2},\cdots,I_{s},J_{1},J_{2},\cdots,J_{t},K\}$ forms an equi-difference conflict-avoiding code. The size of $C$ is
 $$|C|=s+t+1=\frac{(p-1)(q-1)}{2(w-1)}+\frac{q-1}{2(w-1)}+1=pf+1.$$

On the other hand, by Theorem \ref{equi-difference upperbound}, the size of the code satisfies
\begin{equation}
\begin{aligned}
|C|
&\leq  \frac{L-1+\sum\limits_{p\in\Omega^{*}(L,w)}(2w-1-p)}{2w-2}=\frac{L-1+(2w-1-p)}{2w-2}\\
&=\frac{L-p}{2w-2}+1=\frac{p(2(w-1))f}{2w-2}+1=pf+1
\end{aligned}
\end{equation}
 So this construction is optimal.
\end{proof}

\begin{example}
  Let $L=671,w=11,p=11,q=61,f=3$ be the parameters. We can check that $H=(45)$ is a subgroup satisfies condition 2, and $s=30,g=45$. The 34 codewords generated by the generators form an optimal (671,11)-CAC $C$. The set of the generators is

   $\Gamma(C)=\{1,45,12,540,144,441,386,595,606,430,562,463,34,188,408,243,\\
199,232,375,100,474,529,320,309,485,353,452,210,56,507,11,
121,231,61\}$.
\end{example}

\section{Conclusion}
In this paper, we first give a new upper bound of equi-difference CACs. Using this bound, it is easier to be reached and is easier to deal with some problems. Secondly, we give three optimal constructions of equi-difference CACs. One shows the superiority of the new upper bound and the other two make the range of optimal CACs constructed enlarged.

\end{document}